\newcommand{\doublewidetilde}[1]{{%
		\mathpalette\double@widetilde{#1}%
}}
\newcommand{\double@widetilde}[2]{%
	\sbox\z@{$\m@th#1\widetilde{#2}$}%
	\ht\z@=.5\ht\z@
	\widetilde{\box\z@}%
}
\newtheorem{definition}{Definition}
\newtheorem{lemma}{Lemma}
\newtheorem{corollary}{Corollary}
\begin{document}
\title{\LARGE Reconfigurable Intelligent Surface-Assisted Massive MIMO: Favorable Propagation, Channel Hardening, and Rank Deficiency}
\author{Trinh~Van~Chien, Hien~Quoc~Ngo, Symeon~Chatzinotas,  and Bj\"orn~Ottersten
}
\maketitle
\vspace{-16mm}
\par Massive multiple-input multiple-output (MIMO) and reconfigurable intelligent surface (RIS) are two promising technologies for 5G-and-beyond wireless networks, capable of providing large array gain and multiuser spatial multiplexing. Without requiring additional frequency bands, those technologies offer significant improvements in both spectral and energy efficiency by simultaneously serving many users. The performance analysis of an RIS-assisted Massive MIMO system as a function of the channel statistics relies heavily on fundamental properties including favorable propagation, channel hardening, and rank deficiency. The coexistence of both direct and indirect links results in aggregated channels, whose properties are the main concerns of this lecture note. For practical systems with a finite number of antennas and engineered scattering elements of the RIS, we evaluate the corresponding deterministic metrics with  Rayleigh fading channels as a typical example.

\section*{Relevance}
\par Antenna arrays and propagation environments are  key factors fundamentally determining the performance of wireless communication systems. Massive MIMO communications has demonstrated the possibilities of increasing the communication throughput by coherently processing many antenna signals at each base station (BS) compared to the number of served users \cite{Marzetta2016a}. Each antenna element of a massive array can contribute an extra degree of freedom to spatial processing that allows the system to obtain array gain and suppress mutual interference \cite{Rusek2013a}. Massive MIMO, thus, offers unprecedented improvements even under hardware impairments by exploiting the different channels to different users \cite{massivemimobook}. Nonetheless, many scenarios limit the channel capacity despite deploying massive antenna arrays, especially under harsh conditions such as large blockages, which result in ill-conditioned channels \cite{Chien2021TWC}. 

To improve the performance of Massive MIMO under harsh propagation conditions, the use of reconfigurable intelligent surface (RIS) implemented through meta surfaces is very promising. RIS is an emerging technology that smartly controls the propagation  environments  by  a  planar  array  with  many engineered scattering  elements  to  form electromagnetic  waves in  the desired  structure \cite{wu2019intelligent,9326394}. Each engineered scattering  element  induces  a  phase  shift  on the incident signals and reflects them passively without the requirement of radio frequency chains. Thus, with RIS, high power consumption and expensive hardware can be avoided. In the scenarios of large-distance/heavy blockages between the transmitter and the receiver, RIS has been shown to enhance the received signal strength thanks to a phase-shift design that leads to a constructive combination of multiple arriving waves at the receiver, and therefore yields better system performance than the absence of an RIS \cite{huang2019reconfigurable}. 

Since RIS-assisted Massive MIMO is a very new topic, there is no standard reference presenting and providing the effective properties when many antennas and phase-shift elements are installed in the system. This lecture note fills this gap by presenting the channel property aspects of the RIS-assisted Massive MIMO systems when the numbers of BS antennas and engineered scattering elements grow large. Three important properties, namely favorable propagation, channel hardening, and rank deficiency, will be exploited. Achieving such understanding will directly enable us to design a robust and energy-efficient RIS-assisted Massive MIMO system.
\section*{Prerequisites}
\par Basic knowledge of random variables, linear algebra, signals and systems, RIS, and Massive MIMO are required.

\begin{figure}[t]
	\centering
	\includegraphics[trim=7cm 8.5cm 10.5cm 3.0cm, clip=true, width=2.8in]{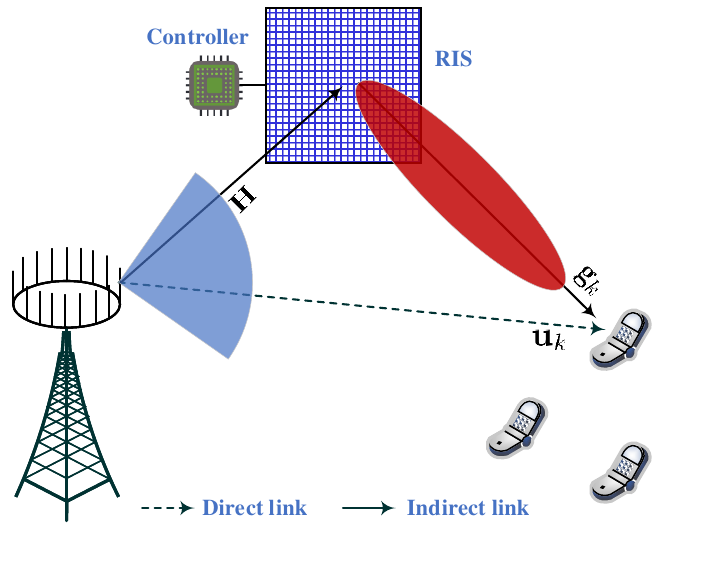} \vspace*{-0.1cm}
	\caption{An RIS-assisted Massive MIMO system.}
	\label{FigSysModelIRS}
	\vspace*{-0.2cm}
\end{figure}
\section*{Problem Statement}
\par We consider an RIS-aided communication system where a BS is equipped with $M$ antennas to serve $K$ single-antenna users as shown in Fig.~\ref{FigSysModelIRS}. To enhance the system performance, an RIS with $N$ engineered scattering elements is deployed in the coverage area between the BS and the users. The phase-shift matrix is denoted as $\mathbf{\Phi} = \mathrm{diag}(e^{j\theta_1}, \ldots, e^{j \theta_N})$ with $-\pi \leq \theta_n \leq \pi, \forall n,$ representing the phase shift of the $n$-th passive engineered scattering element that is controllable. For the indirect link from the BS to a user through the RIS (see Fig.~\ref{FigSysModelIRS}), let us denote $\mathbf{H} \in \mathbb{C}^{M \times N}$ the channel matrix between the BS and the RIS, while $\mathbf{g}_k \in \mathbb{C}^{N}$ denotes the channel vector between the RIS and user~$k$. For the direct link between the BS and user~$k$ (see Fig.~\ref{FigSysModelIRS}), the propagation channel is $\mathbf{u}_k \in \mathbb{C}^M$. We now consider the uplink transmission where all users simultaneously transmit signals to the BS. Let $s_k$ with $\mathbb{E}\{ |s_k|^2 \} = 1$ be the symbol transmitted by user k. Then, the received signal at the BS, denoted by $\mathbf{y} \in \mathbb{C}^M,$ is\footnote{The same methodology should be straightforwardly extended to the downlink transmission, especially when the time division duplexing (TDD) protocol is exploited.}
\begin{equation} \label{eq:ReceivedSignal}
\begin{split}
\mathbf{y} &=  \sqrt{p}\sum\limits_{k=1}^K  \mathbf{u}_k s_k +  \sqrt{p} \mathbf{H} \pmb{\Phi} \sum\limits_{k=1}^K  \mathbf{g}_k s_k  + \mathbf{w} \\
&=  \sqrt{p}  \sum\limits_{k=1}^K \mathbf{z}_k s_k + \mathbf{w},
\end{split}
\end{equation}
where $p$ is the transmit power allocated to each data symbol, $\mathbf{w} \sim \mathcal{CN}(\mathbf{0}, \sigma^2 \mathbf{I}_M)$ is additive noise, and
\begin{equation} \label{eq:AggregatedChannel}
\mathbf{z}_k = \mathbf{u}_k + \mathbf{H} \mathbf{\Phi} \mathbf{g}_k,
\end{equation}
which is the aggregated channel of user~$k$ comprising both the direct and indirect inks. Channels need to be acquired for signal processing. One conventional method is that the BS-RIS channels and the RIS-user channels are estimated separately. This is very challenging because the RIS lacks a digital processing unit. Another method for channel acquisition is that the BS only needs to estimate the aggregated channels $\mathbf{z}_k$. This method yields a great benefit of reducing the system cost with the same pilot training overhead as communication systems without the presence of the RIS. One example with fine details step-by-step on the aggregated channel estimation can be found in \cite{Chien2021TWC}. Therefore, the aggregated channel statistics are of particular interest for performance evaluation purposes.  The problem at hand is to investigate the channel properties when the numbers of BS antennas and engineered scattering elements grow large, i.e., $M, N \rightarrow \infty,$ with the same speed.
\section*{Solution}

The aggregated channels do not follow the same distributions as the conventional Massive MIMO channels in general due to the weighted product from the indirect link in \eqref{eq:AggregatedChannel}. Note that, with the absence of the RIS, the aggregated channels reduce to the conventional Massive MIMO channels, i.e., $\mathbf{z}_k = \mathbf{u}_k, \forall k$. Consequently, the results presented in this lecture note are a general version of the conventional Massive MIMO communications thanks to the presence of the RIS. In the following, the favorable propagation, channel hardening, and rank deficiency properties will be discussed.
\subsection*{Favorable Propagation}
\par This section presents the favorable propagation property in RIS-assisted Massive MIMO systems. Specifically, it describes the orthogonality among the aggregated channel vectors from the BS to the $K$ users as shown in the following definition.  
\begin{definition} \label{DefOptFP}
The aggregated channels offer the favorable propagation property if the inner product of two arbitrary channel vectors $\mathbf{z}_k$ and $\mathbf{z}_{l}$, $k\neq l,$ satisfies
\begin{equation}
 \mathbf{z}_k^H \mathbf{z}_{l} = 0.
\end{equation}
\end{definition}
Favorable propagation is very important. If the channels are favorable, then the signals transmitted from the $K$ users will belong to $K$ orthogonal spaces. As a result, the BS can decode the signal sent by user~$k$ without inter-user interference by simply projecting  the received signal \eqref{eq:ReceivedSignal} on to $\mathbf{z}_k^H$ as
\begin{equation} \label{eq:DetectSig}
\begin{split}
\hat{s}_k &= \mathbf{z}_k^H \mathbf{y} = \sqrt{p} \mathbf{z}_k^H \mathbf{z}_k s_k + \sqrt{p} \sum\limits_{l=1, l \neq k}^K  \mathbf{z}_{k}^H \mathbf{z}_{l}s_{l} + \mathbf{z}_k^H \mathbf{w} \\
& = \sqrt{p} \| \mathbf{z}_k \|^2 s_k + \mathbf{z}_k^H \mathbf{w},
\end{split}
\end{equation}
The above projection corresponds to the linear maximum-ratio combing (MRC) technique. This implies the favorable propagation property yields optimal signal detection performance with only a simple linear processing. Nonetheless, this property will rarely hold in practice. The \textit{asymptotically favorable propagation} given in Definition~\ref{Def:FP2} is, therefore, more interested.
\begin{definition} \label{Def:FP2}
As $M, N \rightarrow \infty$, the aggregated channels offer the asymptotically favorable propagation property if
\begin{equation} \label{eq:FPV1}
 \frac{\mathbf{z}_k^H\mathbf{z}_{l}}{\sqrt{\mathbb{E}\{ \| \mathbf{z}_k \|^2 \}  \mathbb{E}\{ \| \mathbf{z}_{l} \|^2 \} }} \rightarrow 0,
\end{equation}
holds for any $k \neq l$.
\end{definition}
In contrast to the strict requirement in Definition~\ref{DefOptFP}, which the inner product $\mathbf{z}_k^H \mathbf{z}_{l}$ must be exactly zero, the condition~\eqref{eq:FPV1} gives practical tolerability in the sense that $\mathbf{z}_k^H \mathbf{z}_{l}$ does not need to be zero for finite values. For real systems, the BS and the RIS are equipped with a large (but finite) number of BS antennas/scattering elements. Conditioned on an aggregated channel model that offers the asymptotically favorable propagation, we now evaluate how this property behaves with a finite value $M$ and $N$ by the deterministic favorable channel propagation. 
\begin{definition}
For the given channel vectors $\mathbf{z}_k$ and $\mathbf{z}_{l}$, the deterministic favorable propagation metric is defined as
\begin{equation} \label{eq:DeFP}
\mathsf{FP}_{kk'}= \frac{ \mathsf{Var} \{ \mathbf{z}_k^H \mathbf{z}_{l} \}}{ \mathbb{E}\{ \| \mathbf{z}_k \|^2 \} \mathbb{E}\{ \| \mathbf{z}_{l} \|^2 \}}.
\end{equation}
\end{definition}
By evaluating \eqref{eq:DeFP}, we enable to measure how close to the favorable propagation the channel is when the number of antennas is finite. In other words, \eqref{eq:DeFP} represents the speed of convergence in \eqref{eq:FPV1}.
It is worth to emphasize that this deterministic metric is independent of small-scale fading coefficients and therefore by knowing the large-scale fading coefficients and phase shifts, we can evaluate the favorable propagation property of the aggregated channels.

\textbf{The first lesson learned}: The favorable propagation means the aggregated channel vectors from the BS to different users are pair-wisely orthogonal. Under the favorable propagation, the optimal performance can be achieved with simple linear processing techniques. The deterministic metric \eqref{eq:DeFP} measures how close to the favorable propagation the aggregated channels offer with a finite number of BS antennas and engineered scattering elements.
\subsection*{Channel Hardening} 
This section represents the channel hardening property which is a phenomenon where the norm square of the aggregated channel vectors from the BS to the users do not fluctuate much (even the small-scale fading channels randomly change). Mathematically, an aggregated channel offers channel hardening if the condition in Definition~\ref{DefChannelHardening} holds.
\begin{definition} \label{DefChannelHardening}
As $M, N \rightarrow \infty$, an aggregated channel model offers the channel hardening property if 
\begin{equation} \label{eq:ChannelHardening}
\frac{\| \mathbf{z}_k\|^2}{\mathbb{E}\{ \| \mathbf{z}_k \|^2 \}} \rightarrow 1,
\end{equation}
with the almost sure convergence.
\end{definition}
From \eqref{eq:ChannelHardening}, the effective channel gain $\| \mathbf{z}_k \|^2$ can be replaced by its mean value by virtue of many antennas at the BS and many scattering elements at the RIS. If the channels harden, in the uplink, the BS can replace the instantaneous channel gain by its mean value for signal detection. This significantly simplifies the signal processing as well as resource allocation designs at the BS because all designs now can be done on the large-scale fading time scale. With TDD, the importance of channel hardening is even more significant in the downlink data transmission. In the downlink, thanks to the channel hardening property, each user can treat the mean value of the effective channel gain as the true one to detect the desired signal. Thus, no downlink pilot overhead is required for the downlink channel estimation.

Since $\mathbf{z}_k$ is a random vector, it is nontrivial to exploit Definition~\ref{DefChannelHardening} to prove the channel hardening property of the channels. In order to seek for a more tractable metric, we employ the Chebyshev inequality \cite{ngo2017no} such that\footnote{If $x$ is a random variable with the mean $\bar{x}$ and variance $\sigma_{x}^2$, the Chebyshev inequality gives $\mathsf{Pr}(| x - \bar{x} | \geq \epsilon) \leq \sigma_x^2/\epsilon^2$.}
\begin{equation} \label{eq:Chebysev}
\begin{split}
&\mathsf{Pr} \left\{ \left| \frac{\| \mathbf{z}_k\|^2}{\mathbb{E} \{ \| \mathbf{z}_k \|^2\}} - 1 \right|^2 \leq \epsilon \right\} \\
&= 1 - \mathsf{Pr} \left\{ \left| \frac{\| \mathbf{z}_k\|^2}{\mathbb{E} \{ \| \mathbf{z}_k \|^2\}} - 1 \right|^2 \geq \epsilon \right\} \\
&\geq 1 - \frac{\mathsf{Var}\{ \| \mathbf{z}_k \|^2 \}}{\epsilon^2  (\mathbb{E} \{ \| \mathbf{z}_k \|^2 \} )^2 },
\end{split}
\end{equation}
for any $\epsilon > 0$. Observing the last inequality of \eqref{eq:Chebysev}, one can have the channel hardening if
\begin{equation} \label{eq:VarMeanRatio}
\frac{\mathsf{Var}\{ \| \mathbf{z}_k \|^2 \}}{ (\mathbb{E} \{ \| \mathbf{z}_k \|^2 \} )^2 } \rightarrow 0.
\end{equation}
Otherwise, the channel model does not harden. Consequently, \eqref{eq:VarMeanRatio} can be utilized to determine if a channel model offers the channel hardening property as shown in Definition~\ref{DefChannelHardeningMetric}.
\begin{definition} \label{DefChannelHardeningMetric}
For each aggregated channel, the deterministic channel hardening metric is defined as
\begin{equation} \label{eq:CHk}
\mathsf{CH}_k = \frac{ \mathsf{Var}\{ \| \mathbf{z}_k \|^2 \}}{ (\mathbb{E} \{ \| \mathbf{z}_k \|^2 \} )^2}.
\end{equation}
\end{definition}
The deterministic channel hardening metric in \eqref{eq:CHk} can be computed for practical propagation channel models. In some particular scenarios we can even derive this metric in the closed form expression. 

\textbf{The second lesson learned}: The channel hardening property makes each effective channel gain approach its mean value, with a high probability. The deterministic metric \eqref{eq:CHk} measures the hardening level of the channel vectors with a finite number of BS antennas and engineered scattering elements.

\subsection*{Rank Deficiency}
This section presents the rank deficiency, which has been a fundamental issue in many wireless communication aspects. A rich scattering propagation environment may not often offer a channel profile with the rank deficiency. Nonetheless, this may appear in a poor scattering environment such as mmWave communications with a limited number of reflections and propagation paths. For RIS-assisted Massive MIMO systems, we provide the rank deficiency definition of an aggregated channel as in Definition~\ref{Def:ChannelSparsity}.
\begin{definition} \label{Def:ChannelSparsity}
The rank deficiency indicates fewer channel degrees of freedom than the upper limit imposed by the number of BS antennas and engineered scattering elements. Mathematically, we can measure the number of channel degrees of freedom for an aggregated channel as 
 $\mathrm{rank}(\widehat{\mathbf{R}}_k)$,
where $\widehat{\mathbf{R}}_k= \mathbb{E} \{ \mathbf{z}_k \mathbf{z}_k^H \} \in \mathbb{C}^{M \times M}$ is the covariance matrix of the aggregated channel of user~$k$. The rank deficiency implies $\mathrm{rank}(\widehat{\mathbf{R}}_k) < M$.
\end{definition}
The rank of a covariance matrix equals to the number of non-zero eigenvalues, so projecting the covariance matrices utilizing eigendecomposition and analyzing the eigenvalues is one way to observe the rank deficiency. Each eigenvalue stands for the average squared amplitude in a subset of angular directions, the rank deficiency results in zero eigenvalues or almost zero eigenvalues (compressible eigenvalues). Consequently, we can  project a given covariance matrix onto the eigendomain with the eigendecomposition as
$\widehat{\mathbf{R}}_k = \mathbf{V}_k \pmb{\Sigma}_k \mathbf{V}_k^H$,
where $\mathbf{V}_k \in \mathbb{C}^{M \times M}$ is an unitary matrix including the eigenvectors of $\widehat{\mathbf{R}}_k$ and the diagonal matrix $\pmb{\Sigma}_k = \mathrm{diag}([\delta_1, \ldots, \delta_M]^T) \in \mathbb{C}^{M \times M}$ includes the eigenvalues, $\delta_1, \ldots, \delta_M,$ on the diagonal. The rank deficiency appears if a covariance matrix has at least one zero eigenvalue that reduces the system capacity.

We now demonstrate the rank-improved benefits of installing an RIS in the vicinity of a Massive MIMO system by first observing from \eqref{eq:DetectSig} that the effective power from user~$k$ scales up with $\|\mathbf{z}_k \|^2$. One can use the channel hardening property to approximate the effective gain $\|\mathbf{z}_k \|^2$ by the mean value $\mathbb{E} \{ \|  \mathbf{z}_k \|^2 \}$. We further process the mean value as
\begin{equation} \label{eq:zkpower}
\begin{split}
&\mathbb{E} \{ \|\mathbf{z}_k \|^2 \} =  \mathbb{E} \{ \| \mathbf{u}_k \|^2 \} + \mathbb{E} \big\{ \mathbf{u}_k^H  \mathbf{H} \pmb{\Phi} \mathbf{g}_k \big\}  \\
&+ \mathbb{E} \big\{ \mathbf{g}_k^H \pmb{\Phi}^H \mathbf{H}^H  \mathbf{u}_k \big\} + \mathbb{E} \big\{ \big| \mathbf{H}\pmb{\Phi} \mathbf{g}_k \big|^2 \big\} \\
& \stackrel{(a)}{\geq} \mathbb{E} \{ \| \mathbf{u}_k \|^2 \} + \mathbb{E} \big\{ \big| \mathbf{H}\pmb{\Phi} \mathbf{g}_k \big|^2 \big\} \\
&\stackrel{(b)}{=} \mathrm{tr}\big( \mathbb{E} \{ \mathbf{u}_k \mathbf{u}_k^H \} \big) + \mathrm{tr}\big( \mathbb{E} \big\{ \mathbf{H} \pmb{\Phi} \mathbf{g}_k \mathbf{g}_k^H \pmb{\Phi}^H \mathbf{H}^H \big\} \big),
\end{split}
\end{equation}
where $(a)$ is by designing the phase shift matrix $\pmb{\Phi}$ to obtain $\mathbb{E} \big\{ \mathbf{u}_k^H  \mathbf{H} \pmb{\Phi} \mathbf{g}_k \big\} + \mathbb{E} \big\{ \mathbf{g}_k^H \pmb{\Phi}^H \mathbf{H}^H  \mathbf{u}_k \big\} \geq 0$; $(b)$ is obtained by the trace of a product property. The observation from \eqref{eq:zkpower} is that an RIS can improve the received signal power on average since 
$\mathbb{E} \{ \|\mathbf{z}_k \|^2 \} \geq \mathbb{E} \{ \| \mathbf{u}_k \|^2 \}$,
which the equality holds when RIS is not active or without contributions. Since the covariance matrices are positive semi-definite, one possibly designs the phase shifts to obtain
\begin{equation}
\mathrm{rank} \big( \mathbb{E} \{ \mathbf{z}_k \mathbf{z}_k^H \} \big) \geq \mathrm{rank} \big( \mathbb{E} \{ \mathbf{u}_k \mathbf{u}_k^H \} \big),
\end{equation}
which the equality holds, for example, if the covariance matrix is full-rank that happens in rich scattering environments. In contrast, RIS is beneficial to a propagation environment suffering from the rank deficiency issue.  

\textbf{The third lesson learned}: The rank deficiency can be explicitly observed by analyzing the eigenvalues of each covariance matrix. Moreover, RIS can be potentially utilized to compensate for the rank deficiency appearing in Massive MIMO communications.  

\subsection*{Example: Rayleigh Fading Channels}
This section considers a spatially correlated Rayleigh model, which is aligned with a BS radiating isotropically in RIS. All the channels between the BS and user~$k$ are defined as
\begin{equation} \label{eq:ChannelModel1}
	\mathbf{H} = \mathbf{R}_s^{1/2} \widetilde{\mathbf{H}} \mathbf{R}_{si}^{1/2}, \, \, \mathbf{g}_k =  \mathbf{R}_{ik}^{1/2} \tilde{\mathbf{g}}_k, \mbox{ and } \mathbf{u}_k = \mathbf{R}_k^{1/2} \tilde{\mathbf{u}}_k,
\end{equation}
where $\widetilde{\mathbf{H}} \in \mathbb{C}^{M \times N}, \tilde{\mathbf{g}}_k \in \mathbb{C}^{N},$ and $\tilde{\mathbf{u}}_k \in \mathbb{C}^{M}$ represent the small-scale fading matrix/vectors whose elements are independent and identically distributed $\mathcal{CN}(0,1)$; $\mathbf{R_s} \in \mathbb{C}^{M \times M}, \mathbf{R}_{si} \in \mathbb{C}^{N \times N}, \mathbf{R}_{ik} \in \mathbb{C}^{N \times N}$, and $\mathbf{R}_k \in \mathbb{C}^{M \times M}$ represent the  covariance matrices. With a limited scatterer number at the BS, we can model the corresponding covariance matrices by, for instance, the approximate Gaussian local scattering model \cite{massivemimobook} such that the $(m,n)-$th element is given as
\begin{equation} \label{eq:Cov1}
[\mathbf{R}_{\nu}]_{mn} = \frac{\beta_\nu}{S_\nu}\sum\limits_{s=1}^{S_\nu} e^{j \pi (m-n) \sin (\psi_s) } e^{- \frac{\sigma_\nu^2}{2} \big( \pi (m-n)  \cos(\psi_s) \big)^2 },
\end{equation}
where $\nu \in \{ s, k \}$ and $\beta_\nu$ is the corresponding large-scale fading coefficient; $S_\nu$ is the number of scattering clusters around the BS and $\psi_s$ is the nominal angle of arrival (AoA) for cluster~$s$. The multipath components of a cluster is assumed to have AoAs following a Gaussian distribution that is distributed around the nominal AoA with the angular standard deviation $\sigma_\nu$. By assuming that the RIS is fabricated in a rectangular surface of size $N_H \times N_V$ where $N_H$ and $N_V$ are the numbers of elements in each column and row, respectively, the covariance matrices $\mathbf{R}_{si}$ and $\mathbf{R}_{ik}$ are given as
\begin{equation} \label{eq:Cov2}
\mathbf{R}_{si} = \mathbf{R},  \mathbf{R}_{ik} = \beta_{ik} \mathbf{R},
\end{equation}
where $\beta_{ik}$ is the large-scale fading coefficient associated with the cascaded channel $\mathbf{g}_k$. The spatial correlation matrix $\mathbf{R} \in \mathbb{C}^{N \times N}$ has the $(m,n)-$th element defined as 
	$[\mathbf{R}]_{mn} = d_Hd_V\mathrm{sinc} (2 \|\mathbf{v}_m - \mathbf{v}_n \|/ \lambda)$,
where $\mathbf{v}_x = [0, \mod(x-1,N_H)d_H, \lfloor (x-1)/N_H \rfloor d_V ]^T$ with $x \in \{m,n\}$ \cite{bjornson2020rayleigh,van2021outage}; $\lambda$ is the wavelength of a plane wave; $\mod(\cdot)$ and $\lfloor \cdot \rfloor$ are the modulus operation and floor function, respectively; each phase-shift element is of size $d_H \times d_V$ where $d_V$ is the vertical height and $d_H$ is the horizontal width. 

As a special case, the spatially uncorrelated fading channel model gives $\mathbf{R}_\nu = \beta_\nu \mathbf{I}_M$ and $\mathbf{R} =  d_H d_V \mathbf{I}_N$, whose instantaneous channels are defined as
\begin{equation} \label{eq:UncorrRayleigh}
\mathbf{H} =  \sqrt{\beta_sd_H d_V} \widetilde{\mathbf{H}}, \mathbf{g}_k =  \sqrt{\beta_{ik} d_H d_V} \tilde{\mathbf{g}}_k, \mathbf{u}_k =  \sqrt{\beta_k} \tilde{\mathbf{u}}_k,
\end{equation}
which ignores the spatial correlation among elements that is, therefore, less practical. Even though the original channels follow circularly symmetric Gaussian distributions, each aggregated channel follows a non-Gaussian due to the weighted product of multiple Gaussian random variables. For a given phase-shift matrix, we still can compute the channel statistics as in Lemma~\ref{Lemma:Momments}.
\begin{lemma} \label{Lemma:Momments}
	If the channel model \eqref{eq:ChannelModel1} is utilized, the second and forth moments of each aggregated channel vector in \eqref{eq:AggregatedChannel} are respectively given as
	\begin{align}
		&\mathbb{E}\{ \| \mathbf{z}_k \|^2 \} = M \beta_k +  M \beta_s \mathrm{tr}(\widetilde{\pmb{\Theta}}_k), \label{eq:zk2}\\
		&\mathbb{E}\{ \| \mathbf{z}_k \|^4 \} = \big( M \beta_k  +  M \beta_s \mathrm{tr}(\widetilde{\pmb{\Theta}}_k) \big)^2 +   2 \mathrm{tr}(\mathbf{R}_s \mathbf{R}_k) \mathrm{tr}(\widetilde{\pmb{\Theta}}_k) \notag \\
		&+  \big|\mathrm{tr}(\widetilde{\pmb{\Theta}}_k ) \big|^2\mathrm{tr}\big(\mathbf{R}_s^2\big) + \mathrm{tr} \big(\widetilde{\pmb{\Theta}}_k^2 \big) \big( M^2 \beta_s^2 + \mathrm{tr}\big(\mathbf{R}_s^2\big)\big) + 
		\mathrm{tr}(\mathbf{R}_k^2),\label{eq:zk4v1}
	\end{align}
	where $\widetilde{\pmb{\Theta}}_k= \pmb{\Phi}^H \mathbf{R}_{si} \pmb{\Phi} \mathbf{R}_{ik}$.
\end{lemma}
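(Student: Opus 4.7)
The plan is to expand $\|\mathbf{z}_k\|^2 = \|\mathbf{u}_k\|^2 + \mathbf{u}_k^H\mathbf{H}\pmb{\Phi}\mathbf{g}_k + \mathbf{g}_k^H\pmb{\Phi}^H\mathbf{H}^H\mathbf{u}_k + \|\mathbf{H}\pmb{\Phi}\mathbf{g}_k\|^2$, label the four summands $A$, $B$, $B^{*}$, $C$, and reduce every expectation to nested applications of the standard complex-Gaussian quadratic-form identities $\mathbb{E}\{\mathbf{x}^H\mathbf{M}\mathbf{x}\}=\mathrm{tr}(\mathbf{M}\mathbf{\Sigma})$ and $\mathbb{E}\{(\mathbf{x}^H\mathbf{M}\mathbf{x})^2\}=(\mathrm{tr}(\mathbf{M}\mathbf{\Sigma}))^2+\mathrm{tr}((\mathbf{M}\mathbf{\Sigma})^2)$ for $\mathbf{x}\sim\mathcal{CN}(\mathbf{0},\mathbf{\Sigma})$, combined with the mutual independence of $\widetilde{\mathbf{H}}$, $\tilde{\mathbf{g}}_k$, and $\tilde{\mathbf{u}}_k$ afforded by the Rayleigh model \eqref{eq:ChannelModel1}.

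For \eqref{eq:zk2}, $\mathbb{E}\{B\}=\mathbb{E}\{B^{*}\}=0$ by the zero-mean independence of $\mathbf{u}_k$ from $(\mathbf{H},\mathbf{g}_k)$, while $\mathbb{E}\{A\}=\mathrm{tr}(\mathbf{R}_k)=M\beta_k$ since \eqref{eq:Cov1} gives $[\mathbf{R}_k]_{mm}=\beta_k$. For $\mathbb{E}\{C\}$ I would condition on $\mathbf{g}_k$: because $\widetilde{\mathbf{H}}$ has independent standard complex-Gaussian entries and $\mathbf{R}_{si}^{1/2}\pmb{\Phi}\mathbf{g}_k$ is deterministic under this conditioning, $\mathbf{H}\pmb{\Phi}\mathbf{g}_k\mid\mathbf{g}_k \sim \mathcal{CN}(\mathbf{0},\alpha\mathbf{R}_s)$ with $\alpha:=\mathbf{g}_k^H\pmb{\Phi}^H\mathbf{R}_{si}\pmb{\Phi}\mathbf{g}_k$, whereby $\mathbb{E}_{\mathbf{H}}\{C\mid\mathbf{g}_k\}=\alpha\,\mathrm{tr}(\mathbf{R}_s)=M\beta_s\alpha$; averaging further over $\mathbf{g}_k$ gives $\mathbb{E}\{\alpha\}=\mathrm{tr}(\widetilde{\pmb{\Theta}}_k)$ and hence \eqref{eq:zk2}.

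For \eqref{eq:zk4v1} I square $A+B+B^{*}+C$ and expand. Circular symmetry of $\mathbf{u}_k$ kills every monomial whose conditional form over $\mathbf{u}_k$ carries an unequal count of $\mathbf{u}_k$ and $\mathbf{u}_k^{*}$ factors, so the cross expectations $\mathbb{E}\{A(B+B^{*})\}$, $\mathbb{E}\{(B+B^{*})C\}$, $\mathbb{E}\{B^2\}$, and $\mathbb{E}\{(B^{*})^2\}$ all vanish, leaving $\mathbb{E}\{\|\mathbf{z}_k\|^4\}=\mathbb{E}\{A^2\}+2\mathbb{E}\{AC\}+2\mathbb{E}\{|B|^2\}+\mathbb{E}\{C^2\}$. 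Then: $\mathbb{E}\{A^2\}=(M\beta_k)^2+\mathrm{tr}(\mathbf{R}_k^2)$ from the quadratic-form second-moment identity applied to $\tilde{\mathbf{u}}_k$; $\mathbb{E}\{AC\}=M^2\beta_k\beta_s\mathrm{tr}(\widetilde{\pmb{\Theta}}_k)$ by independence and \eqref{eq:zk2}; $\mathbb{E}\{|B|^2\}=\mathrm{tr}(\mathbf{R}_s\mathbf{R}_k)\mathrm{tr}(\widetilde{\pmb{\Theta}}_k)$ by taking expectation over $\mathbf{u}_k$ (which turns $|B|^2$ into $\mathbf{g}_k^H\pmb{\Phi}^H\mathbf{H}^H\mathbf{R}_k\mathbf{H}\pmb{\Phi}\mathbf{g}_k$), then over $\widetilde{\mathbf{H}}$ given $\mathbf{g}_k$ and finally over $\mathbf{g}_k$; and for $\mathbb{E}\{C^2\}$ the inner expectation yields $\alpha^2(M^2\beta_s^2+\mathrm{tr}(\mathbf{R}_s^2))$ while $\mathbb{E}_{\mathbf{g}_k}\{\alpha^2\}=(\mathrm{tr}(\widetilde{\pmb{\Theta}}_k))^2+\mathrm{tr}(\widetilde{\pmb{\Theta}}_k^2)$ by a second use of the same identity. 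Regrouping $(M\beta_k)^2+2M^2\beta_k\beta_s\mathrm{tr}(\widetilde{\pmb{\Theta}}_k)+M^2\beta_s^2(\mathrm{tr}(\widetilde{\pmb{\Theta}}_k))^2$ into the perfect square $\big(M\beta_k+M\beta_s\mathrm{tr}(\widetilde{\pmb{\Theta}}_k)\big)^2$ and collecting the remaining contributions reproduces \eqref{eq:zk4v1}.

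The main obstacle is the twofold nested evaluation of $\mathbb{E}\{C^2\}$: after the inner expectation, $\alpha^2=\big(\tilde{\mathbf{g}}_k^H \mathbf{R}_{ik}^{1/2}\pmb{\Phi}^H\mathbf{R}_{si}\pmb{\Phi}\mathbf{R}_{ik}^{1/2}\tilde{\mathbf{g}}_k\big)^2$ has to be recognized as the square of a Gaussian quadratic form whose kernel, after cyclic rearrangement of the trace, has trace $\mathrm{tr}(\widetilde{\pmb{\Theta}}_k)$ and whose square has trace $\mathrm{tr}(\widetilde{\pmb{\Theta}}_k^2)$; this pairs with the outer factor $M^2\beta_s^2+\mathrm{tr}(\mathbf{R}_s^2)$ to produce four cross-summands that must be carefully re-grouped against the perfect square to avoid double-counting. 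Beyond this bookkeeping, the remaining algebra is routine linear-algebraic simplification.
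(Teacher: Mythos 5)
Your proposal is correct and follows essentially the same route as the paper: the same four-term decomposition of $\|\mathbf{z}_k\|^2$, the same identification of the surviving terms $\mathbb{E}\{A^2\}$, $2\mathbb{E}\{AC\}$, $2\mathbb{E}\{|B|^2\}$, $\mathbb{E}\{C^2\}$, and the same evaluation of each via complex-Gaussian quadratic-form moments, with the perfect-square regrouping at the end. The only cosmetic difference is that you condition on $\mathbf{g}_k$ and use the scaled conditional covariance $\alpha\mathbf{R}_s$, whereas the paper factors out the normalized direction vector $\mathbf{t}_k\sim\mathcal{CN}(\mathbf{0},\mathbf{I}_M)$ and invokes the cited moment lemmas — the same mechanism stated differently.
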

\begin{proof}
	The second moment of the aggregated channel of user~$k$ is first processed by the independence of the direct and indirect links as
	\begin{equation}
		\begin{split}
			\mathbb{E}\{ \| \mathbf{z}_k \|^2 \} &= \mathbb{E} \{ \| \mathbf{u}_k \|^2 \} + \mathbb{E} \left\{ \| \mathbf{H} \pmb{\Phi} \mathbf{g}_k \|^2 \right\} \\
			& =  \mathrm{tr}( \mathbf{R}_k) +   \mathbb{E} \left\{ \mathbf{g}_k^H \pmb{\Phi}^H \mathbf{H}^H \mathbf{H} \pmb{\Phi} \mathbf{g}_k \ \right\} \\
			&= M \beta_k +  M \beta_s \mathrm{tr}(\widetilde{\pmb{\Theta}}_k),
		\end{split}
	\end{equation}
	where the second moment of the indirect link is computed by using the channel model \eqref{eq:ChannelModel1} together with \cite[Lemma~8]{Chien2020book}. Next, the forth moment of the aggregated channel of user~$k$ is computed as
	\begin{equation} \label{eq:zk4}
		\begin{split}
			&\mathbb{E}\{ \| \mathbf{z}_k \|^4 \} = \mathbb{E} \{ \| \mathbf{u}_k +  \mathbf{H} \pmb{\Phi} \mathbf{g}_k \|^4 \}  =  \mathbb{E} \{ | a + b + c + d  |^2 \} \\
			&  = \mathbb{E}\{ |a|^2 \} + \mathbb{E}\{ |b|^2 \} + \mathbb{E}\{ |c|^2 \} + 2 \mathbb{E}\{ ad \} + \mathbb{E}\{|d|^2\},
		\end{split}
	\end{equation}
	where $a = \| \mathbf{u}_k \|^2,$ $b = \mathbf{u}_k^H \mathbf{H} \pmb{\Phi} \mathbf{g}_k,$ $c= \mathbf{g}_k^H \pmb{\Phi}^H \mathbf{H}^H \mathbf{u}_k,$ and $d = \| \mathbf{H} \pmb{\Phi} \mathbf{g}_k \|^2$. The first expectation in the last equality of \eqref{eq:zk4} is computed by deploying \cite[Lemma~9]{Chien2020book} as
	\begin{equation} \label{eq:uk4v1}
		\mathbb{E} \{ \| \mathbf{u}_k \|^4 \} = | \mathrm{tr} ( \mathbf{R}_k ) |^2 + \mathrm{tr}( \mathbf{R}_k^2) = M^2 \beta_k^2 + \mathrm{tr}( \mathbf{R}_k^2).
	\end{equation}
	Meanwhile, one uses \cite[Lemma~8]{Chien2020book} to compute the next three expectations in the last equality of \eqref{eq:zk4} as
	\begin{equation} \label{eq:bcad}
		\begin{split}
		&\mathbb{E}\{ |b|^2 \} = \mathbb{E} \{ |c|^2 \} = \mathrm{tr}(\mathbf{R}_s \mathbf{R}_k) \mathrm{tr}(\widetilde{\pmb{\Theta}}_k),\\
		&\mathbb{E} \{ ad \} =  \mathrm{tr}( \mathbf{R}_s) \mathrm{tr}( \mathbf{R}_k)  \mathrm{tr}(\widetilde{\pmb{\Theta}}_k) = M^2 \beta_s \beta_k \mathrm{tr}(\widetilde{\pmb{\Theta}}_k).
		\end{split}
	\end{equation}
	We tackle the last expectation in the last equality of \eqref{eq:zk4} as
	\begin{equation} \label{eq:d2}
		\begin{split}
			&\mathbb{E}\{ |d|^2 \} = \mathbb{E} \left\{ \| \mathbf{H} \pmb{\Phi} \mathbf{g}_k \|^4 \right\} \\
			&=  \mathbb{E} \left\{ \left| \mathbf{g}_k^H \pmb{\Phi}^H \mathbf{R}_{ik}^{1/2} \widetilde{\mathbf{H}}^H \mathbf{R}_s \widetilde{\mathbf{H}} \mathbf{R}_{ik}^{1/2} \pmb{\Phi} \mathbf{g}_k \right|^2 \right\} \\
			&=  \mathbb{E} \left\{ \left\|\mathbf{R}_{ik}^{1/2} \pmb{\Phi} \mathbf{g}_k\right\|^4 \left| \mathbf{t}_k^H \mathbf{R}_s  \mathbf{t}_k \right|^2 \right\},
		\end{split}
	\end{equation}
	where $\mathbf{t}_k= \widetilde{\mathbf{H}} \mathbf{R}_{ik}^{1/2} \pmb{\Phi} \mathbf{g}_k/ \|\mathbf{R}_{ik}^{1/2} \pmb{\Phi} \mathbf{g}_k \|$. Since  $\mathbf{t}_k \sim \mathcal{CN}(\mathbf{0}, \mathbf{I}_M)$ and noting that $\mathbf{t}_k$ and $\mathbf{g}_k$ are independent circularly symmetric Gaussian vectors, we recast \eqref{eq:d2} to as
	\begin{equation} \label{eq:d2v1}
		\begin{split}
			&\mathbb{E}\{ |d|^2 \} = \ \mathbb{E} \left\{ \left\|\mathbf{R}_{ik}^{1/2} \pmb{\Phi} \mathbf{g}_k\right\|^4 \right\} \mathbb{E}\big\{ \big| \mathbf{t}_k^H \mathbf{R}_s \mathbf{t}_k \big|^2 \big\} \\
			&= \Big( \big|\mathrm{tr}\big(\widetilde{\pmb{\Theta}}_k \big) \big|^2 + \mathrm{tr} \big(  \widetilde{\pmb{\Theta}}_k^2 \big) \Big)\left( |\mathrm{tr}(\mathbf{R}_s)|^2 + \mathrm{tr}(\mathbf{R}_s^2)\right) \\
			&= \left( \big|\mathrm{tr}\big(\widetilde{\pmb{\Theta}}_k \big) \big|^2 + \mathrm{tr} \big(  \widetilde{\pmb{\Theta}}_k^2 \big) \right)\big( M^2 \beta_s^2 + \mathrm{tr}(\mathbf{R}_s^2)\big),
		\end{split}
	\end{equation}
	Plugging \eqref{eq:uk4v1}, \eqref{eq:bcad}, and \eqref{eq:d2v1} into \eqref{eq:zk4}, we obtain the result as shown in the lemma.
\end{proof}
Note that the even moments in Lemma~\ref{Lemma:Momments} are multivariate functions of phase shifts and covariance matrices. Consequently, this allows us to further investigate other utility metrics such as spectral efficiency and outage probability in closed form that is independent of small-scale fading coefficients.  By  means of  \cite[Theorem~II.1]{lasserre1995trace}, we observe the power scaling law for a single-user system (user~$k$ only) over the spatially correlated channels as   
\begin{multline} \label{eq:Bound}
M \beta_k + M \beta_s \sum_{n=1}^N \lambda_n (\pmb{\Phi}^H \mathbf{R}_{si} \pmb{\Phi}) \lambda_{N-n+1} (\mathbf{R}_{ik})   \leq	 \\
  \mathbb{E}\{ \| \mathbf{z}_k \|^2 \} \leq   M \beta_k + M \beta_s \underbrace{\sum_{n=1}^N \lambda_n (\pmb{\Phi}^H \mathbf{R}_{si}\pmb{\Phi}) \lambda_{n} (\mathbf{R}_{ik}  )}_{\triangleq \gamma_k },
\end{multline}
where $\{ \lambda_n (\pmb{\Phi}^H \mathbf{R}_{si}\pmb{\Phi} ) \}_{1}^{N}$ and $ \{ \lambda_{n} (\mathbf{R}_{ik}  ) \}_1^N$ are the eigenvalues of the Hermitian matrices $\pmb{\Phi}^H \mathbf{R}_{si}\pmb{\Phi}$ and $\mathbf{R}_{ik} $ sorted in descending order. We observe that
\begin{equation} \label{eq:Upperbound}
\begin{split}
\gamma_k & \stackrel{(a)}{\leq} \left( \sum_{n=1}^N \lambda_n (\pmb{\Phi}^H \mathbf{R}_{si}\pmb{\Phi}) \right) \left( \sum_{n=1}^N \lambda_{n} (\mathbf{R}_{ik}  ) \right) \\
&\stackrel{(b)}{=}  \mathrm{tr}(\pmb{\Phi}^H \mathbf{R}_{si}\pmb{\Phi}) \mathrm{tr}(\mathbf{R}_{ik}) \stackrel{(c)}{=} N^2 d_H^2 d_V^2 \beta_{ik},
\end{split}
\end{equation}
where $(a)$ is obtained because the eigenvalues are non-negative; $(b)$ is due to the trace and eigenvalues relationship, $(c)$ is acquired by the covariance matrix structure in \eqref{eq:Cov2} and since the phase matrix is unitary. From \eqref{eq:Bound} and \eqref{eq:Upperbound},  we observe that $\mathbb{E}\{ \| \mathbf{z}_k \|^2 \}$ scales up with an array gain provided by the RIS in the order from $N$ to $N^2$ depending upon the phase shift design. Specifically, the right-hand of \eqref{eq:Bound} is upper bounded by $M\beta_k + MN^2 d_H^2 d_V^2 \beta_s \beta_{ik} $, whose scaling law is possibly  obtained by the optimal phase shift design \cite{wu2019intelligent}. For the case of uncorrelated Rayleigh fading channels, the channel statistics are given in Corollary~\ref{CorUncorrMomentsv1}.
\begin{figure}[t]
	\begin{minipage}{0.5\textwidth}
		\centering
		\includegraphics[trim=0.4cm 0cm 0.0cm 0.5cm, clip=true, width=3.5in]{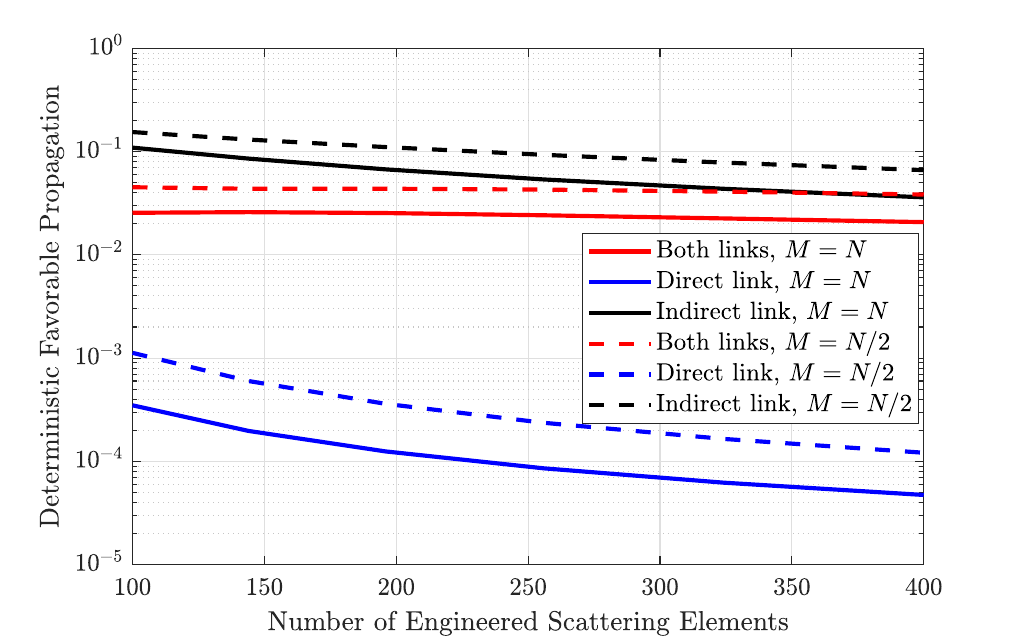} 
		\\ $(a)$
	\end{minipage}
	\begin{minipage}{0.5\textwidth}
		\centering
		\includegraphics[trim=0.4cm 0cm 0.0cm 0.5cm, clip=true, width=3.5in]{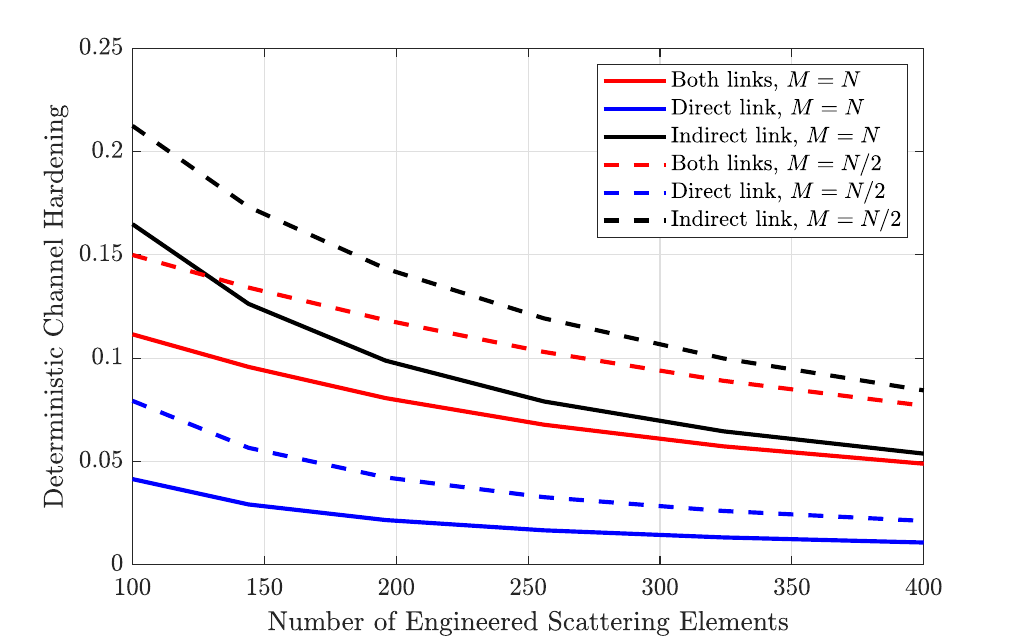} 
		\\$(b)$
	\end{minipage}
	\begin{minipage}{0.5\textwidth}
		\centering
		\includegraphics[trim=0.4cm 0cm 0.0cm 0.5cm, clip=true, width=3.5in]{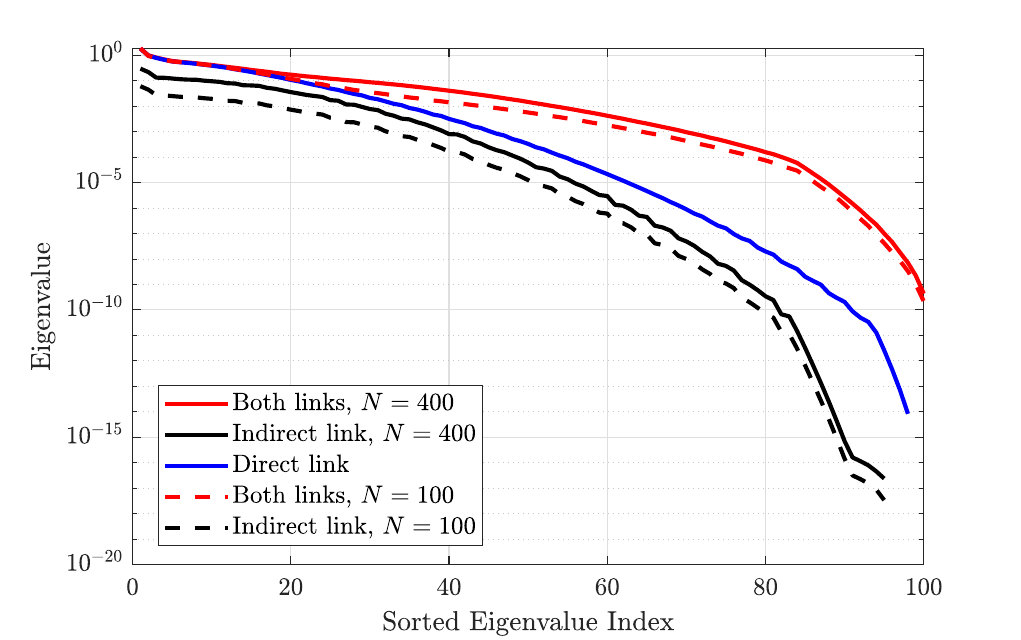}
		\\$(c)$
	\end{minipage}
	\caption{The fundamentals of an RIS-assisted Massive MIMO system with a finite number of BS antennas and engineered scattering elements of the RIS: $(a)$ The deterministic favorable propagation metric; $(b)$ The deterministic channel hardening metric; and $(c)$ The sorted eigvenvalues with $M=100$.}
	\label{Fig2}
	\vspace{-0.2cm}
\end{figure}
\begin{corollary}\label{CorUncorrMomentsv1}
If the channel model \eqref{eq:UncorrRayleigh} is utilized, the second and forth moments of each aggregated channel vector in \eqref{eq:AggregatedChannel} are respectively given as
\begin{align}
	&\mathbb{E}\{ \| \mathbf{z}_k \|^2 \} = M \beta_k + MN \beta_s \beta_{ik} d_H^2 d_V^2, \label{eq:zk2unc}\\
	&\mathbb{E}\{ \| \mathbf{z}_k \|^4 \} = M^2 \beta_k^2 + 4MN\beta_k \xi_k  \notag\\
	& +  (M^2 +M)(N^2 + N) \xi_k^2 + M \beta_k^2.
\end{align}
with $\xi_k = \beta_s \beta_{ik} d_H^2 d_V^2$.
\end{corollary}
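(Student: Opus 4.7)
The approach is to specialize Lemma~\ref{Lemma:Momments} directly to the uncorrelated covariance structure prescribed by \eqref{eq:UncorrRayleigh}, namely $\mathbf{R}_s = \beta_s \mathbf{I}_M$, $\mathbf{R}_k = \beta_k \mathbf{I}_M$, $\mathbf{R}_{si} = d_H d_V \mathbf{I}_N$, and $\mathbf{R}_{ik} = \beta_{ik} d_H d_V \mathbf{I}_N$. The only conceptual ingredient beyond substitution is the observation that $\pmb{\Phi}$ is diagonal with unit-modulus entries, so $\pmb{\Phi}^H \pmb{\Phi} = \mathbf{I}_N$; this makes $\widetilde{\pmb{\Theta}}_k$ collapse to a scalar multiple of the identity that is independent of the particular phase-shift design.

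First I would evaluate $\widetilde{\pmb{\Theta}}_k = \pmb{\Phi}^H \mathbf{R}_{si} \pmb{\Phi} \mathbf{R}_{ik}$. Factoring out the scalars and applying $\pmb{\Phi}^H \pmb{\Phi} = \mathbf{I}_N$ yields $\widetilde{\pmb{\Theta}}_k = \beta_{ik} d_H^2 d_V^2 \mathbf{I}_N$, from which $\mathrm{tr}(\widetilde{\pmb{\Theta}}_k) = N \beta_{ik} d_H^2 d_V^2$ and $\mathrm{tr}(\widetilde{\pmb{\Theta}}_k^2) = N \beta_{ik}^2 d_H^4 d_V^4$ follow at once. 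The remaining trace quantities that appear in \eqref{eq:zk2} and \eqref{eq:zk4v1} reduce similarly: $\mathrm{tr}(\mathbf{R}_s^2) = M \beta_s^2$, $\mathrm{tr}(\mathbf{R}_k^2) = M \beta_k^2$, and $\mathrm{tr}(\mathbf{R}_s \mathbf{R}_k) = M \beta_s \beta_k$.

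Substituting $\mathrm{tr}(\widetilde{\pmb{\Theta}}_k)$ into \eqref{eq:zk2} immediately produces \eqref{eq:zk2unc}. For the fourth moment I would introduce the abbreviation $\xi_k = \beta_s \beta_{ik} d_H^2 d_V^2$ so that $M \beta_s \mathrm{tr}(\widetilde{\pmb{\Theta}}_k) = MN \xi_k$, and then plug every trace into \eqref{eq:zk4v1}. Expanding $(M\beta_k + MN \xi_k)^2$, collecting the three $\xi_k^2$ contributions arising from this square, from $|\mathrm{tr}(\widetilde{\pmb{\Theta}}_k)|^2 \mathrm{tr}(\mathbf{R}_s^2)$, and from $\mathrm{tr}(\widetilde{\pmb{\Theta}}_k^2)(M^2 \beta_s^2 + \mathrm{tr}(\mathbf{R}_s^2))$, and combining them with the cross contribution through $2\mathrm{tr}(\mathbf{R}_s \mathbf{R}_k) \mathrm{tr}(\widetilde{\pmb{\Theta}}_k)$ and the direct-link tail $\mathrm{tr}(\mathbf{R}_k^2)$, completes the reduction to the polynomial in $M$, $N$, $\beta_k$, and $\xi_k$ claimed in the corollary.

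The principal obstacle is purely algebraic bookkeeping: the $\xi_k^2$ contributions arise from three different places and must be recognized as $(M^2+M)(N^2+N)\xi_k^2$ only after being added, while the $\beta_k^2$ and $\beta_k \xi_k$ groupings must be tracked separately so that the final form matches the stated expression. No new probabilistic argument beyond Lemma~\ref{Lemma:Momments} is required, since the corollary is just the evaluation of the moment formulas under an especially simple covariance structure in which $\widetilde{\pmb{\Theta}}_k$ becomes a scaled identity.
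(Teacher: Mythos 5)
Your method is exactly the paper's: the published proof is a one-line substitution of $\mathrm{tr}(\widetilde{\pmb{\Theta}}_k)=N\beta_{ik}d_H^2d_V^2$ and $\mathrm{tr}(\widetilde{\pmb{\Theta}}_k^2)=N\beta_{ik}^2d_H^4d_V^4$ (together with $\mathrm{tr}(\mathbf{R}_s^2)=M\beta_s^2$, $\mathrm{tr}(\mathbf{R}_k^2)=M\beta_k^2$, $\mathrm{tr}(\mathbf{R}_s\mathbf{R}_k)=M\beta_s\beta_k$) into \eqref{eq:zk2} and \eqref{eq:zk4v1}, and your observation that $\pmb{\Phi}^H\pmb{\Phi}=\mathbf{I}_N$ forces $\widetilde{\pmb{\Theta}}_k=\beta_{ik}d_H^2d_V^2\mathbf{I}_N$ is the correct justification for those traces, which the paper leaves implicit. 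One caveat on your final bookkeeping claim: the substitution does \emph{not} reproduce the corollary verbatim. The $\beta_k\xi_k$ contribution is $2M^2N\beta_k\xi_k$ from expanding $(M\beta_k+MN\xi_k)^2$ plus $2MN\beta_k\xi_k$ from $2\,\mathrm{tr}(\mathbf{R}_s\mathbf{R}_k)\mathrm{tr}(\widetilde{\pmb{\Theta}}_k)$, i.e. $2(M^2+M)N\beta_k\xi_k$, whereas the corollary prints $4MN\beta_k\xi_k$; these agree only when $M=1$. This appears to be a typo in the statement rather than a flaw in your route: the uncorrelated channel-hardening metric \eqref{eq:CHUncorr}, which is derived from these moments, is consistent only with the $2(M^2+M)N\beta_k\xi_k$ version. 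Your $\xi_k^2$ collection $(M^2+M)(N^2+N)$ and the $\beta_k^2$ terms are correct as described, so the derivation is sound; you should simply have flagged the mismatch in the cross term instead of asserting agreement with the printed coefficient.
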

\begin{proof}
By substituting the covariance matrices on the definition into \eqref{eq:zk2} and \eqref{eq:zk4v1} with $\mathrm{tr}(\widetilde{\pmb{\Theta}}) = N \beta_{ik} d_H^2 d_V^2 $ and $\mathrm{tr}(\widetilde{\pmb{\Theta}}^2) = N  \beta_{ik}^2 d_H^4 d_V^4$, we arrive at the desired results in Corollary~\ref{CorUncorrMomentsv1}.
\end{proof}
Lemma~\ref{lemmaFP} concludes the favorable propagation of the spatially correlated Rayleigh fading model. 
\begin{lemma} \label{lemmaFP}
	By assuming that $ \liminf_{N \rightarrow \infty} \mathrm{tr}( \pmb{\Phi}^H \mathbf{R} \pmb{\Phi} \mathbf{R} )/ N >0$ and $\limsup_{N \rightarrow \infty} \| \pmb{\Phi}^H \mathbf{R} \pmb{\Phi} \mathbf{R} \|_2 < \infty$, the spatially correlated Rayleigh channel model offers the asymptotically favorable propagation property and the deterministic favorable propagation metric can be computed in the closed form expression as
	\begin{equation} \label{eq:FPkkprimeClosed}
		\mathrm{FP}_{kl}= \frac{\mathrm{tr} \big( \big( \mathbf{R}_k +  \mathrm{tr}(\widetilde{\pmb{\Theta}}_k) \mathbf{R}_s \big)\big(  \mathbf{R}_{l} +  \mathrm{tr}(\widetilde{\pmb{\Theta}}_{l}) \mathbf{R}_s  \big) \big)}{\big(  M \beta_k +  M \beta_s \mathrm{tr}(\widetilde{\pmb{\Theta}}_k) \big)\big( M \beta_{l} +  M \beta_s \mathrm{tr}(\widetilde{\pmb{\Theta}}_{l}) \big)}.
	\end{equation} 
\end{lemma}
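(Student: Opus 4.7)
The plan is to exploit independence between users and zero-mean circularly symmetric Gaussianity to collapse $\mathsf{Var}\{\mathbf{z}_k^H\mathbf{z}_l\}$ into a trace of a product of covariance matrices, then identify those covariance matrices in closed form. First I note that for $k\neq l$, the vectors $\mathbf{z}_k$ and $\mathbf{z}_l$ are independent since the small-scale fading vectors $\tilde{\mathbf{g}}_k,\tilde{\mathbf{u}}_k$ attached to different users are independent (the shared channel $\mathbf{H}$ is multiplied by $\pmb{\Phi}\mathbf{g}_k$ with $\mathbf{g}_k$ zero-mean Gaussian, making $\mathbf{H}\pmb{\Phi}\mathbf{g}_k$ conditionally zero-mean given $\mathbf{H}$). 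Therefore $\mathbb{E}\{\mathbf{z}_k^H\mathbf{z}_l\}=0$ and
\begin{equation}
\mathsf{Var}\{\mathbf{z}_k^H\mathbf{z}_l\}=\mathbb{E}\{\mathbf{z}_k^H\mathbf{z}_l\mathbf{z}_l^H\mathbf{z}_k\}=\mathrm{tr}\!\big(\mathbb{E}\{\mathbf{z}_k\mathbf{z}_k^H\}\,\mathbb{E}\{\mathbf{z}_l\mathbf{z}_l^H\}\big),
\end{equation}
by the cyclic property of the trace combined with independence.

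Next I would compute the covariance $\widehat{\mathbf{R}}_k=\mathbb{E}\{\mathbf{z}_k\mathbf{z}_k^H\}$. Writing $\mathbf{z}_k=\mathbf{u}_k+\mathbf{H}\pmb{\Phi}\mathbf{g}_k$ and noting that $\mathbf{u}_k$ is independent of the indirect link and both are zero-mean, only the two ``square'' terms survive, giving $\widehat{\mathbf{R}}_k=\mathbf{R}_k+\mathbb{E}\{\mathbf{H}\pmb{\Phi}\mathbf{g}_k\mathbf{g}_k^H\pmb{\Phi}^H\mathbf{H}^H\}$. Substituting the Kronecker structure $\mathbf{H}=\mathbf{R}_s^{1/2}\widetilde{\mathbf{H}}\mathbf{R}_{si}^{1/2}$ and $\mathbf{g}_k=\mathbf{R}_{ik}^{1/2}\tilde{\mathbf{g}}_k$, first taking expectation over $\tilde{\mathbf{g}}_k$ and then applying the identity $\mathbb{E}\{\widetilde{\mathbf{H}}\mathbf{A}\widetilde{\mathbf{H}}^H\}=\mathrm{tr}(\mathbf{A})\mathbf{I}_M$ for i.i.d.\ $\mathcal{CN}(0,1)$ entries yields $\widehat{\mathbf{R}}_k=\mathbf{R}_k+\mathrm{tr}(\widetilde{\pmb{\Theta}}_k)\mathbf{R}_s$. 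Plugging this (and the analogous expression for user~$l$) into the trace formula above produces the numerator of \eqref{eq:FPkkprimeClosed}, while the denominator is given directly by \eqref{eq:zk2} from Lemma~\ref{Lemma:Momments}.

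For the asymptotic claim, I would do a scaling analysis. Since $\mathbf{R}_{si}=\mathbf{R}$ and $\mathbf{R}_{ik}=\beta_{ik}\mathbf{R}$, we have $\mathrm{tr}(\widetilde{\pmb{\Theta}}_k)=\beta_{ik}\,\mathrm{tr}(\pmb{\Phi}^H\mathbf{R}\pmb{\Phi}\mathbf{R})$, which is $\Theta(N)$ by the two stated liminf/limsup assumptions (the limsup on $\|\pmb{\Phi}^H\mathbf{R}\pmb{\Phi}\mathbf{R}\|_2$ gives the $O(N)$ upper bound and the liminf on the normalized trace gives the matching lower bound). The denominator of $\mathsf{FP}_{kl}$ is then $\Theta(M^2N^2)$. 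Expanding the numerator into four terms, the dominant one is $\mathrm{tr}(\widetilde{\pmb{\Theta}}_k)\mathrm{tr}(\widetilde{\pmb{\Theta}}_l)\mathrm{tr}(\mathbf{R}_s^2)=O(MN^2)$ (using boundedness of the spectral norm of $\mathbf{R}_s$ to bound $\mathrm{tr}(\mathbf{R}_s^2)$ by $O(M)$), while the other three terms are of smaller order in $N$. Hence $\mathsf{FP}_{kl}=O(1/M)\to 0$ as $M,N\to\infty$ with a fixed ratio, which by \eqref{eq:Chebysev}-type Chebyshev reasoning delivers the asymptotically favorable propagation in the sense of Definition~\ref{Def:FP2}.

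The main obstacle I anticipate is the covariance computation in step two: it requires carefully unwinding the nested Gaussian structure $\mathbf{H}\pmb{\Phi}\mathbf{g}_k\mathbf{g}_k^H\pmb{\Phi}^H\mathbf{H}^H$ in the correct order (inner expectation over $\tilde{\mathbf{g}}_k$ first, then over $\widetilde{\mathbf{H}}$) and invoking the appropriate Gaussian-quadratic identities from \cite{Chien2020book} without mixing up which factor the cyclic trace absorbs. Once $\widehat{\mathbf{R}}_k$ is in hand, the remainder of the argument is a bookkeeping exercise combined with the provided scaling assumptions.
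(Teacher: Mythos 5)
Your derivation of the closed form retraces the paper's own proof: zero mean gives $\mathsf{Var}\{\mathbf{z}_k^H\mathbf{z}_l\}=\mathbb{E}\{|\mathbf{z}_k^H\mathbf{z}_l|^2\}$, this is factorized as $\mathrm{tr}\big(\mathbb{E}\{\mathbf{z}_l\mathbf{z}_l^H\}\mathbb{E}\{\mathbf{z}_k\mathbf{z}_k^H\}\big)$ exactly as in \eqref{eq:Ezkzkprime}, the covariance $\mathbb{E}\{\mathbf{z}_k\mathbf{z}_k^H\}=\mathbf{R}_k+\mathrm{tr}(\widetilde{\pmb{\Theta}}_k)\mathbf{R}_s$ is obtained exactly as in \eqref{eq:zkzk}, and the denominator comes from \eqref{eq:zk2}. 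Your treatment of the asymptotic claim is the one real departure: the paper bounds the random quantity $\mathbf{z}_k^H\mathbf{z}_l/(MN)$ directly in \eqref{eq:Asymptkkprime}, whereas you argue that $\mathsf{FP}_{kl}=O(1/M)$ and invoke Chebyshev to get convergence in probability; that route is legitimate (it is essentially the scaling discussion around \eqref{eq:Numkkprime}--\eqref{eq:Limv1} turned into the proof) and arguably cleaner than the paper's one-line bound.

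The genuine gap is your justification of the factorization step. The vectors $\mathbf{z}_k$ and $\mathbf{z}_l$ are \emph{not} independent: both aggregated channels contain the common BS--RIS matrix $\mathbf{H}$. The parenthetical observation that $\mathbf{H}\pmb{\Phi}\mathbf{g}_k$ is conditionally zero-mean given $\mathbf{H}$ yields $\mathbb{E}\{\mathbf{z}_k^H\mathbf{z}_l\}=0$, i.e., uncorrelatedness, not independence. What you actually have is conditional independence given $\mathbf{H}$, so $\mathbb{E}\{|\mathbf{z}_k^H\mathbf{z}_l|^2\}=\mathbb{E}_{\mathbf{H}}\big\{\mathrm{tr}\big(\mathbb{E}\{\mathbf{z}_l\mathbf{z}_l^H\,|\,\mathbf{H}\}\,\mathbb{E}\{\mathbf{z}_k\mathbf{z}_k^H\,|\,\mathbf{H}\}\big)\big\}$, and since $\mathbb{E}\{\mathbf{z}_k\mathbf{z}_k^H\,|\,\mathbf{H}\}=\mathbf{R}_k+\mathbf{H}\pmb{\Phi}\mathbf{R}_{ik}\pmb{\Phi}^H\mathbf{H}^H$ still depends on $\mathbf{H}$, pulling the expectation inside the trace is not automatic; the indirect--indirect term requires a fourth moment of $\widetilde{\mathbf{H}}$. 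Carrying it out with the standard Gaussian identity $\mathbb{E}\{\mathbf{W}^H\mathbf{A}\mathbf{W}\mathbf{B}\mathbf{W}^H\mathbf{C}\mathbf{W}\}=\mathrm{tr}(\mathbf{A})\mathrm{tr}(\mathbf{C})\mathbf{B}+\mathrm{tr}(\mathbf{A}\mathbf{C})\mathrm{tr}(\mathbf{B})\mathbf{I}$ gives $\mathbb{E}\{|\mathbf{g}_k^H\pmb{\Phi}^H\mathbf{H}^H\mathbf{H}\pmb{\Phi}\mathbf{g}_l|^2\}=(\mathrm{tr}(\mathbf{R}_s))^2\mathrm{tr}(\widetilde{\pmb{\Theta}}_k\widetilde{\pmb{\Theta}}_l)+\mathrm{tr}(\mathbf{R}_s^2)\mathrm{tr}(\widetilde{\pmb{\Theta}}_k)\mathrm{tr}(\widetilde{\pmb{\Theta}}_l)$, i.e., an additional term of order $M^2N$ that the ``independence plus cyclic trace'' shortcut misses (its analogue for $k=l$ is visible in \eqref{eq:zk4v1} as the $M^2\beta_s^2\,\mathrm{tr}(\widetilde{\pmb{\Theta}}_k^2)$ contribution). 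This does not change the conclusion that $\mathsf{FP}_{kl}\rightarrow 0$ at rate $1/M$ when $M,N$ grow proportionally, but it does affect the finite-$(M,N)$ closed form. To be fair, the paper's own proof asserts \eqref{eq:Ezkzkprime} in one line and shares exactly this issue; still, as a standalone argument your stated reason (independence of $\mathbf{z}_k$ and $\mathbf{z}_l$) is false, and the step needs the conditioning-on-$\mathbf{H}$, fourth-moment treatment to stand.
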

\begin{proof}
By utilizing the aggregated channel on the definition in \eqref{eq:AggregatedChannel} together with its statistic information in Corollary~\ref{CorUncorrMomentsv1}, we first observe \eqref{eq:Asymptkkprime}.
\begin{figure*}
\begin{equation} \label{eq:Asymptkkprime}
\frac{|\mathbf{z}_k^H\mathbf{z}_{l}| }{ \sqrt{\mathbb{E}\{ \| \mathbf{z}_k \|^2 \}  \mathbb{E}\{ \| \mathbf{z}_{l} \|^2 \} }} \leq \frac{1}{\alpha_{kl}} \left| \frac{ \mathbf{u}_k^H \mathbf{u}_{l} }{MN} + \frac{\mathbf{u}_k^H \mathbf{H} \pmb{\Phi} \mathbf{g}_{l} }{MN} + \frac{\mathbf{g}_k^H \pmb{\Phi}^H \mathbf{H}^H  \mathbf{u}_{l} }{MN} +  \frac{\mathbf{g}_k^H \pmb{\Phi}^H \mathbf{H}^H \mathbf{H} \pmb{\Phi}  \mathbf{g}_{l} }{MN} \right| \rightarrow 0, \mbox{ as }  M, N \rightarrow \infty 
\end{equation}
\vspace*{-0.7cm}
\end{figure*}
In \eqref{eq:Asymptkkprime}, $\alpha_{kl} = \beta_s d_H^2 d_V^2 \lambda \sqrt{\beta_{ik} \beta_{il} }$, where $\tilde{\lambda} = \liminf_{N \rightarrow \infty} \mathrm{tr}( \pmb{\Phi}^H \mathbf{R} \pmb{\Phi} \mathbf{R} )/ N$. The obtained result in \eqref{eq:Asymptkkprime} implies that the asymptotically favorable propagation in \eqref{eq:FPV1} holds.

For a random variable $X$, its variance is computed as $\mathsf{Var}\{ X\} = \mathbb{E} \{ |X|^2 \} - \big| \mathbb{E} \{ X\} \big|^2 $. Thus, $\mathsf{Var} \big\{ \mathbf{z}_k^H \mathbf{z}_{l} \big \} = \mathbb{E} \big\{ |\mathbf{z}_k^H \mathbf{z}_{l}|^2 \big\}$ due to the zero mean of the aggregated channels. Next, we have
	\begin{equation}\label{eq:Ezkzkprime}
		\mathbb{E} \big\{ |\mathbf{z}_k^H \mathbf{z}_{k'}|^2 \big\} = \mathrm{tr}\left( \mathbb{E}\big\{\mathbf{z}_{k'} \mathbf{z}_{k'}^H \big\} \mathbb{E}\big\{ \mathbf{z}_k \mathbf{z}_k^H \big\} \right).
	\end{equation}
	From \eqref{eq:Ezkzkprime}, we now compute $\mathbb{E} \{ \mathbf{z}_k \mathbf{z}_k^H\}$. We have,
	\begin{equation} \label{eq:zkzk}
		\begin{split}
			& \mathbb{E}\big\{ \mathbf{z}_k \mathbf{z}_k^H \big\} = \mathbb{E}\{ \mathbf{u}_k \mathbf{u}_k^H \} + \mathbb{E}\left\{ \mathbf{H} \pmb{\Phi} \mathbf{g}_k \mathbf{g}_k^H \pmb{\Phi}^H \mathbf{H}^H \right\} \\
			&= \mathbf{R}_k +  \mathbb{E}\left\{ \mathbf{R}_s^{1/2} \widetilde{\mathbf{H}}  \mathbf{R}_{si}^{1/2} \pmb{\Phi} \mathbf{R}_{ik} \pmb{\Phi}^H \mathbf{R}_{si}^{1/2} \widetilde{\mathbf{H}}^H \mathbf{R}_s^{1/2} \right\} \\
			&=  \mathbf{R}_k +  \mathrm{tr}(\widetilde{\pmb{\Theta}}_k) \mathbf{R}_s,
		\end{split}
	\end{equation}
	where the last equality in \eqref{eq:zkzk} is obtained by utilizing \cite[Lemma~8]{Chien2020book}. In a similar manner, we obtain $\mathbb{E}\{ \mathbf{z}_{l}  \mathbf{z}_{l}^H \} = \mathbf{R}_{l} +  \mathrm{tr}(\widetilde{\pmb{\Theta}}_{l}) \mathbf{R}_s$. Hence, we obtain \eqref{eq:Ezkzkprime} in the closed form as
	\begin{equation} \label{eq:zkzkprimeclosed}
		\mathbb{E}\big\{ \big| \mathbf{z}_k \mathbf{z}_{l}^H \big|^2 \big\} = \mathrm{tr} \big( \big( \mathbf{R}_k +  \mathrm{tr}(\widetilde{\pmb{\Theta}}_k) \mathbf{R}_s \big)\big( \mathbf{R}_{l} +  \mathrm{tr}(\widetilde{\pmb{\Theta}}_{l}) \mathbf{R}_s  \big) \big).
	\end{equation}
Substituting \eqref{eq:zk2} and \eqref{eq:zkzkprimeclosed} into \eqref{eq:DeFP}, we obtain the result as shown in the lemma and conclude the proof.
\end{proof}
We now exploit the identity $\mathrm{tr}(\mathbf{X}+\mathbf{Y}) = \mathrm{tr}(\mathbf{X}) +  \mathrm{tr}(\mathbf{Y})$ for the two matrices $\mathbf{X}$ and $\mathbf{Y}$, the numerator of \eqref{eq:FPkkprimeClosed}, denoted by $\mathsf{Num}_{kl}$, is reformulated as
\begin{equation} \label{eq:Numkkprime}
\begin{split}
\mathsf{Num}_{kl} =&  \mathrm{tr}(\mathbf{R}_k \mathbf{R}_{l} ) + \mathrm{tr}(\widetilde{\pmb{\Theta}}_k) \mathrm{tr}(\mathbf{R}_s \mathbf{R}_{l}) \\
&+   \mathrm{tr}(\widetilde{\pmb{\Theta}}_{l})\mathrm{tr}(\mathbf{R}_k \mathbf{R}_s) +  \mathrm{tr}(\widetilde{\pmb{\Theta}}_k) \mathrm{tr}(\widetilde{\pmb{\Theta}}_{l}) \mathrm{tr}(\mathbf{R}_s^2).
\end{split}
\end{equation}
Meanwhile, denoting by $\mathsf{Den}_{kl}$ the denominator of \eqref{eq:FPkkprimeClosed} and it is reformulated as
\begin{equation}
	\begin{split}
		\mathsf{Den}_{kl} =&   \mathrm{tr}(\mathbf{R}_k)  \mathrm{tr}(\mathbf{R}_{l}) + \mathrm{tr}(\widetilde{\pmb{\Theta}}_k) \mathrm{tr}(\mathbf{R}_s) \mathrm{tr}(\mathbf{R}_{l})  \\
		&+  \mathrm{tr}(\widetilde{\pmb{\Theta}}_{l})\mathrm{tr}(\mathbf{R}_k)\mathrm{tr}(\mathbf{R}_s)+ \mathrm{tr}(\widetilde{\pmb{\Theta}}_k)\mathrm{tr}(\widetilde{\pmb{\Theta}}_{l}) \big(\mathrm{tr}(\mathbf{R}_s) \big)^2.
	\end{split}
\end{equation}
Since the covariance and phase-shift matrices are all positive semi-definite, we can use the identity $\mathrm{tr}(\mathbf{X}\mathbf{Y}) \leq \mathrm{tr}(\mathbf{X})\mathrm{tr}(\mathbf{Y})$ for the positive semi-definite matrices to obtain $\mathsf{Num}_{kl} \leq \mathsf{Den}_{kl}$. Dividing the numerator and denominator of \eqref{eq:FPkkprimeClosed} by $N^2 M^2$, one obtains that
\begin{multline} \label{eq:Limv1}
\lim_{M,N \rightarrow \infty} \mathrm{FP}_{kl} = \\
 \lim_{M,N \rightarrow \infty} \frac{\mathrm{tr}(\widetilde{\pmb{\Theta}}_k) \mathrm{tr}(\widetilde{\pmb{\Theta}}_{l}) \mathrm{tr}(\mathbf{R}_s^2) /(N^2 M^2)}{\mathrm{tr}(\widetilde{\pmb{\Theta}}_k)\mathrm{tr}(\widetilde{\pmb{\Theta}}_{l}) \big(M \beta_s \big)^2/(N^2 M^2)} \rightarrow 0,
\end{multline}
thanks to the fact that $\mathrm{tr}(\mathbf{R}_s^2)/M^2 \rightarrow 0$ as $M \rightarrow \infty$. As a consequence of \eqref{eq:Limv1}, under the spatially correlated Rayleigh fading propagation, the convergence of \eqref{eq:FPkkprimeClosed} is very fast as $M$ increases. For the uncorrelated Rayleigh fading model, the deterministic favorable propagation metric is computed as
\begin{equation} \label{eq:FPFactorUn}
\mathsf{FP}_{kl} = \frac{\beta_k \beta_{l} + N \beta_k \xi_{l} + N \beta_{l} \xi_{k} +  N^2 \xi_k \xi_{l} }{M \beta_k \beta_{l} + MN \beta_k \xi_{l} + MN \beta_{l} \xi_{k}  + M N^2 \xi_{k} \xi_{l} } \rightarrow 0,
\end{equation}
as $M, N \rightarrow \infty$ with a fixed ratio. In \eqref{eq:FPFactorUn}, the numerator and denominator respectively scale up in the order of $N^2 \xi_k \xi_{l}$ and $MN^2 \xi_k \xi_{l}$, thus the convergence is proportional to $1/M$. By only considering the spatially correlated model, it explicitly indicates that the deterministic favorable propagation metric converges to zero quickly as more antenna and phase-shift elements are installed. Fig.~\ref{Fig2}(a) plots the deterministic favorable propagation metric in \eqref{eq:FPkkprimeClosed} for a system with the equal number of antennas and phase-shift elements. The covariance matrices are given in \eqref{eq:Cov1} and \eqref{eq:Cov2} with $S_\nu = 3$, $\sigma_\nu = 3^\circ,$ and $d_H = d_V = \lambda/4$. The BS is located at the origin of a Cartesian system coordinate, while RIS is at the $(x,y)-$coordinate $(125, 125)$~m and two users are at $(250,12.5)$~m and $(125, -250)$~m. We observe that the correlated Rayleigh fading channels offer the favorable propagation with the lowest value (the best performance) belonging to the conventional Massive MIMO (denoted as ``direct link" in Fig.~\ref{Fig2}). The worst case  is for a system with only the indirect links (denoted as ``indirect link" in Fig.~\ref{Fig2}). The aggregated channels (denoted as ``both link" in Fig.~\ref{Fig2}) falls in the between. 

We now compute the deterministic channel hardening metric for the spatially correlated Rayleigh fading channel model \eqref{eq:ChannelModel1} as shown in Lemma~\ref{lemma:Hardening}.
\begin{lemma}\label{lemma:Hardening}
If the propagation channels are spatially correlated Rayleigh fading channels by \eqref{eq:ChannelModel1}, the deterministic channel hardening metric is computed in the closed form as in \eqref{eq:CHv1}.
\begin{figure*}
\begin{equation} \label{eq:CHv1}
\mathsf{CH}_k = \frac{ 2 \mathrm{tr}(\mathbf{R}_s \mathbf{R}_k) \mathrm{tr}(\widetilde{\pmb{\Theta}}_k) +  \big|\mathrm{tr}(\widetilde{\pmb{\Theta}}_k ) \big|^2\mathrm{tr}\big(\mathbf{R}_s^2\big) + \mathrm{tr} \big(\widetilde{\pmb{\Theta}}_k^2 \big) \big( M^2 \beta_s^2 + \mathrm{tr}\big(\mathbf{R}_s^2\big)\big) + \mathrm{tr}(\mathbf{R}_k^2)}{\big(M \beta_k + M \beta_s \mathrm{tr}(\widetilde{\pmb{\Theta}}_k) \big)^2}
\end{equation}
\vspace*{-0.1cm}
\hrule
\vspace*{-0.2cm}
\end{figure*}
\end{lemma}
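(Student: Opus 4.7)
The plan is to obtain \eqref{eq:CHv1} as a direct consequence of Lemma~\ref{Lemma:Momments} together with the elementary identity $\mathsf{Var}\{X\} = \mathbb{E}\{X^2\} - (\mathbb{E}\{X\})^2$, applied to the real nonnegative random variable $X = \|\mathbf{z}_k\|^2$. Concretely, I would rewrite Definition~\ref{DefChannelHardeningMetric} as
\begin{equation*}
\mathsf{CH}_k = \frac{\mathbb{E}\{ \| \mathbf{z}_k \|^4 \} - \big(\mathbb{E}\{ \| \mathbf{z}_k \|^2 \}\big)^2}{\big(\mathbb{E}\{ \| \mathbf{z}_k \|^2 \}\big)^2},
\end{equation*}
and substitute the closed-form second and fourth moments already supplied by \eqref{eq:zk2} and \eqref{eq:zk4v1}, respectively.

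The key observation driving the cancellation is that the first summand of \eqref{eq:zk4v1} is precisely $(M\beta_k + M\beta_s\mathrm{tr}(\widetilde{\pmb{\Theta}}_k))^2$, which is the square of the second moment in \eqref{eq:zk2}. Consequently, it cancels against the $(\mathbb{E}\{\|\mathbf{z}_k\|^2\})^2$ term in the numerator, leaving exactly the four residual summands of \eqref{eq:zk4v1}, namely $2\mathrm{tr}(\mathbf{R}_s\mathbf{R}_k)\mathrm{tr}(\widetilde{\pmb{\Theta}}_k)$, $|\mathrm{tr}(\widetilde{\pmb{\Theta}}_k)|^2 \mathrm{tr}(\mathbf{R}_s^2)$, $\mathrm{tr}(\widetilde{\pmb{\Theta}}_k^2)(M^2\beta_s^2 + \mathrm{tr}(\mathbf{R}_s^2))$, and $\mathrm{tr}(\mathbf{R}_k^2)$. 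Dividing by $(M\beta_k + M\beta_s\mathrm{tr}(\widetilde{\pmb{\Theta}}_k))^2$ then yields \eqref{eq:CHv1} in one line.

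There is essentially no genuine obstacle here, since Lemma~\ref{Lemma:Momments} has already discharged the nontrivial Gaussian-quadratic-form computations (using independence of $\widetilde{\mathbf{H}}$, $\tilde{\mathbf{g}}_k$, $\tilde{\mathbf{u}}_k$ together with the matrix identities from \cite[Lemmas~8,~9]{Chien2020book}); the present argument reduces to algebraic bookkeeping. The only caveat worth an explicit remark is that $\|\mathbf{z}_k\|^2$ is real, so the variance equals $\mathbb{E}\{\|\mathbf{z}_k\|^4\} - (\mathbb{E}\{\|\mathbf{z}_k\|^2\})^2$ with no conjugation subtlety. Moreover, since $\mathbf{R}_s$, $\mathbf{R}_k$, $\mathbf{R}_{si}$, and $\mathbf{R}_{ik}$ are Hermitian positive semi-definite, every trace appearing in the resulting numerator is real and nonnegative, confirming that the expression in \eqref{eq:CHv1} is a bona fide variance-to-squared-mean ratio as required by Definition~\ref{DefChannelHardeningMetric}.
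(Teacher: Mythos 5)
Your proposal is correct and matches the paper's own proof, which likewise notes $\mathsf{Var}\{\|\mathbf{z}_k\|^2\} = \mathbb{E}\{\|\mathbf{z}_k\|^4\} - (\mathbb{E}\{\|\mathbf{z}_k\|^2\})^2$ and plugs \eqref{eq:zk2} and \eqref{eq:zk4v1} into \eqref{eq:CHk}. Your explicit remark that the first summand of \eqref{eq:zk4v1} cancels against the squared mean is exactly the bookkeeping the paper leaves implicit, so nothing is missing.
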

\begin{proof}
Notice that $\mathsf{Var}\{ \| \mathbf{z}_k \|^2 \} = \mathbb{E}\{ \| \mathbf{z}_k\|^4 \} - (\mathbb{E}\{ \| \mathbf{z}_k \|^2 \})^2$, then plugging \eqref{eq:zk2} and \eqref{eq:zk4v1} into \eqref{eq:CHk}, we obtain the result as shown in the lemma.
\end{proof}
The numerator and denominator of \eqref{eq:CHv1} scale up with the order of $N^2M + M^2N$ and $M^2N^2$, thus the deterministic channel hardening metric converges to zero as the number of antennas and phase shifts grows. For the uncorrelated Rayleigh fading channels, one can simplify \eqref{eq:CHv1} to as
\begin{equation} \label{eq:CHUncorr}
\mathsf{CH}_k = \frac{ 2N\beta_k\xi_{k} + N^2\xi_k^2 + (M+1)N \xi_k^2 + \beta_k^2}{M (\beta_k + N\xi_k)^2} \rightarrow 0,
\end{equation}
as $M, N \rightarrow \infty$ with a fixed ratio and the convergence is proportional to $(M+N)/(MN)$. The metric \eqref{eq:CHUncorr} depends on the number of phase-shift elements and antennas that could be optimized to have better channel hardening effects. However, the closed-form expression \eqref{eq:CHUncorr} is independent of the phase-shift coefficients and we do not have a smart environment control based on the channel statistic with the uncorrelated Rayleigh fading channels. Fig.~\ref{Fig2}(b) displays the deterministic channel hardening metric in \eqref{eq:CHv1}, which demonstrates that the channels is less hardened when including the extra channel coefficients from the indirect link under the presence of the RIS. Furthermore, Fig.~\ref{Fig2}(c) shows a higher rank of the covariance matrix $\mathbb{E}\{ \mathbf{h}_k \mathbf{h}_k^H \}$ thanks to the assistance of the RIS with $100$ antennas equipped at the BS. 

\vspace*{-0.2cm}
\section*{What We Have Learned}
\par This lecture note described three fundamental properties of RIS-assisted Massive MIMO channels comprising the  \textit{favorable propagation}, \textit{channel hardening}, and \textit{rank deficiency}. The definitions of those properties were presented along with several measures that are deterministic metrics expressed by channel statistics only. The closed-form expressions of the deterministic metrics were computed for a system with arbitrary (but finite) numbers of antennas and phase shifts.  Numerical results demonstrate that by exploiting the Rayleigh fading model, the aggregated channels still offer the favorable propagation and channel sparsity, but are worse than conventional Massive MIMO systems. We further provided insightful observations on the rank deficiency that occurs when there are a limited number of scatterers in the propagation environments. We approved that this issue can be handled by the assistance of an RIS. 
\vspace*{-0.2cm}
\section*{Acknowledgments}
The  work of T. V. Chien, S. Chatzinotas,  and B. Ottersten  was  supported  by  RISOTTI-Reconfigurable Intelligent Surface for Smart Cities  under  Project FNR/C20/IS/14773976/RISOTTI. The work of H. Q. Ngo was supported by the UK Research and Innovation Future Leaders Fellowships under Grant MR/S017666/1.
\vspace*{-0.2cm}
\section*{Authors}
\textit{Trinh Van Chien} (trinhchien.dt3@gmail.com) received his Ph.D. degree in Electrical Engineering with specialization in communication systems from Link\"oping University, Sweden, in 2020. He is currently a research associate at University of Luxembourg, Luxembourg. His research interests focus on convex optimization problems, theoretical analysis, and machine learning applications for wireless communications and image \& video processing. He is an IEEE member.

\textit{Hien Quoc Ngo} (hien.ngo@qub.ac.uk) is currently a Reader with Queen's University Belfast, U.K, and a UKRI Future Leaders Fellow. He has coauthored many research articles in wireless communications and coauthored the Cambridge University Press textbook ``Fundamentals of Massive MIMO" (2016). His main research interests include Massive MIMO, cell-free Massive MIMO, physical layer security, and cooperative communications. He received the IEEE ComSoc Stephen O. Rice Prize in 2015, the IEEE ComSoc Leonard G. Abraham Prize in 2017, and the Best Ph.D. Award from EURASIP in 2018.  He serves as the Editor for the IEEE Transactions on Wireless Communications, IEEE Wireless Communications Letters.
 
\textit{Symeon Chatzinotas} (symeon.chatzinotas@uni.lu) is currently Full Professor / Chief Scientist I and Head of the SIGCOM Research Group at SnT, University of Luxembourg. He is coordinating the research activities on communications and networking, acting as a PI for more than 20 projects and main representative for 3GPP, ETSI, DVB. He was the co-recipient of the 2014 IEEE Distinguished Contributions to Satellite Communications Award and Best Paper Awards at EURASIP JWCN, CROWNCOM, ICSSC. He is currently in the editorial board of the IEEE Transactions on Communications, IEEE Open Journal of Vehicular Technology and the International Journal of Satellite Communications and Networking.

\textit{Bj\"orn Ottersten} (bjorn.ottersten@uni.lu) (S'87 – M'89 – SM'99 – F'04) received the Ph.D. degree in electrical engineering from Stanford University, Stanford, CA, USA, in 1990. In 1991, he was appointed Professor of signal processing with the Royal Institute of Technology (KTH), Stockholm, Sweden. He is currently the Director for the Interdisciplinary Centre for Security, Reliability and Trust, University of Luxembourg. He is a recipient of the IEEE Signal Processing Society Technical Achievement Award, the EURASIP Group Technical Achievement Award, and the European Research Council advanced research grant twice. 

\bibliographystyle{IEEEtran}
\bibliography{IEEEabrv,refs}
\end{document}